\documentclass[12pt]{article}
\usepackage{amsmath,amsthm,amssymb}
\usepackage{graphicx,subfig,url,verbatim}
\usepackage[pdftex]{hyperref}
\hypersetup{
 colorlinks=true,
 urlcolor=blue
}

\newtheorem{theorem}{Theorem} 
\newtheorem{proposition}[theorem]{Proposition} 

\newtheorem{lemma}[theorem]{Lemma}

\theoremstyle{definition}

\theoremstyle{remark}

\author{Dashiell E.A.\,Fryer \\ The University of Illinois at Urbana-Champaign}
\title{The Uniform Distribution in Incentive Dynamics}
\date{\today}

\begin{document}
\maketitle

\begin{abstract}
The uniform distribution is an important counterexample in game theory as many of the canonical game dynamics have been shown not to converge to the equilibrium in certain cases. In particular none of the canonical game dynamics converge to the uniform distribution in a form of rock-paper-scissors where the amount an agent can lose is more than the agent can win, despite this fact, it is the unique Nash equilibrium. I will show that certain incentive dynamics are asymptotically stable at the uniform distribution when it is an incentive equilibrium.
\end{abstract}

Incentive dynamics~\cite{fryer2012existence} are given by \[ \dot{x}_{i\alpha} = \varphi_{i\alpha}(x) - x_{i\alpha}\sum_\beta\varphi_{i\beta}(x)\] where $\varphi(x)$ is the a valid incentive for the game. It was shown that if the incentive for a finite game is continuous, there exists a fixed point characterized by \[\varphi_{i\alpha}(\hat{x})=\hat{x}_{i\alpha}\sum_\beta\varphi_{i\beta}(\hat{x})\ \forall\alpha,i.\] Notice that if this occurs at the uniform distribution, either $\varphi_{i\alpha}(\hat{x})$ are all zero, or they are all the same for each agent.

Nash's original incentive function is fixed if and only if all the component incentives are zero and thus it can only be in the first case described above. In contrast, the incentive function given by $\varphi^D_{i\alpha}(x) = \sum_\gamma(a_{\alpha\gamma}-u_i(x))_+$ is only zero when $u_i(x)\geq \max_\gamma u_i(e_\alpha,e_\gamma)$, where $e_\gamma\in S_{-i}$ which can occur at the uniform distribution only if the game is constant, which is a degenerate case of little interest. Despite their differences we will demonstrate that the two can agree under certain circumstances. Also, we will see that the latter incentive is globally asymptotically stable at a uniform Nash equilibrium where the canonical dynamics fail to converge.

\subsection{A Bad Game of Rock-Paper-Scissors}

The standard game of Rock-Paper-Scissors (RPS) is given as a two person zero sum game with payoffs given in the table below on the left.
\[ \begin{array}[c]{cc}
\begin{array}{|rr|rr|rr|}
\hline 0,&0 & -1,&1 & 1,&-1 \\
\hline 1,&-1 & 0,&0 & -1,&1 \\
\hline -1,&1 & 1,&-1 & 0,&0 \\
\hline \end{array} & \begin{array}{|rr|rr|rr|}
\hline 0,&0 & -b,&a & a,&-b \\ 
\hline a,&-b & 0,&0 & -b,&a \\
\hline -b,&a & a,&-b & 0,&0 \\
\hline \end{array} 
\end{array}\]
To the right of the RPS payoffs we have a generalized RPS with $a$ and $b$ both positive. The case when $b>a$, or an agent can lose more than it can win, is an important example of a game. The unique Nash equilibrium for this game is the uniform distribution. We have seen many examples of incentive dynamics that have Nash equilibrium as their interior fixed points, such as the replicator equations, projection dynamics, the logit equations, best reply dynamics, and the Brown-von Neumann-Nash equations. However, in every one of these cases the dynamics do not converge to the unique equilibrium as shown in the figures below\footnote{The images were produced using the Dynamo Mathematica package developed by Sandholm, Dokumaci, and Franchetti~\cite{website:sandholm2011dynamo}. Colors indicate speed: blue is slowest and red is fastest}. This leads us to the natural question: does any incentive dynamic converge to a rest point from any initial point?

\begin{figure}[ht]
\centering
\subfloat[BNN]{\label{fig:BNN}\includegraphics[width=0.5\textwidth]{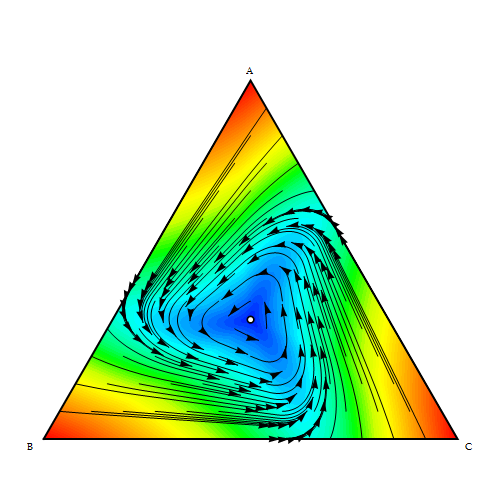}}
\subfloat[Logit (0.2)]{\label{fig:logit}\includegraphics[width=0.5\textwidth]{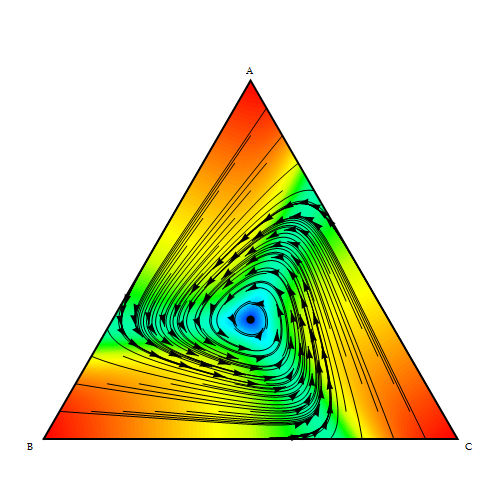}}\\
\subfloat[Smith]{\label{fig:Smith}\includegraphics[width=0.5\textwidth]{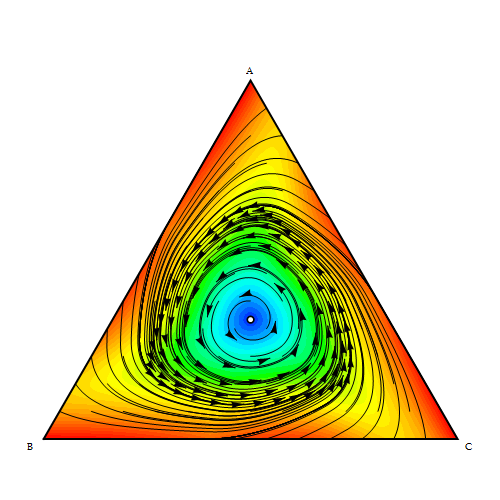}}
\caption{Stable limit cycles in Bad RPS}
\end{figure}

\begin{figure}[hb]
\centering
\subfloat[standard RPS]{\label{fig:replicator_RPS}\includegraphics[width=0.5\textwidth]{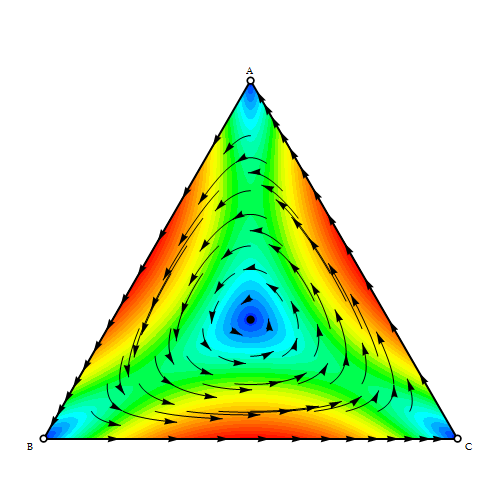}}
\subfloat[bad RPS]{\label{fig:replicator_BadRPS}\includegraphics[width=0.5\textwidth]{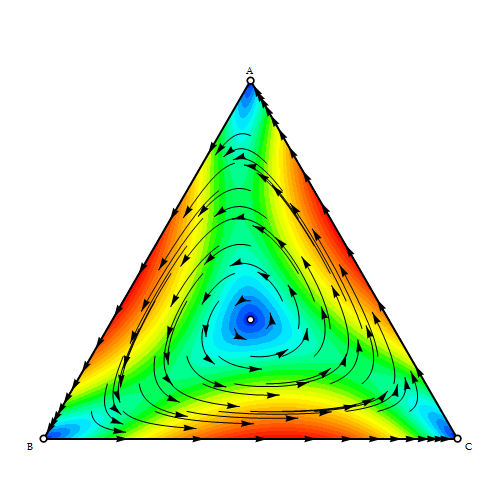}}
\caption{The replicator dynamics display invariant limit cycles and unstable equilibrium points in the RPS}
\end{figure}

\begin{figure}[ht]
\centering
\subfloat[standard RPS]{\label{fig:projection_RPS}\includegraphics[width=0.5\textwidth]{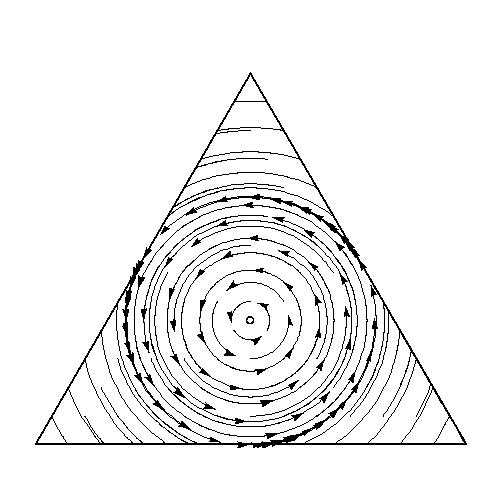}}
\subfloat[bad RPS]{\label{fig:projection_BadRPS}\includegraphics[width=0.5\textwidth]{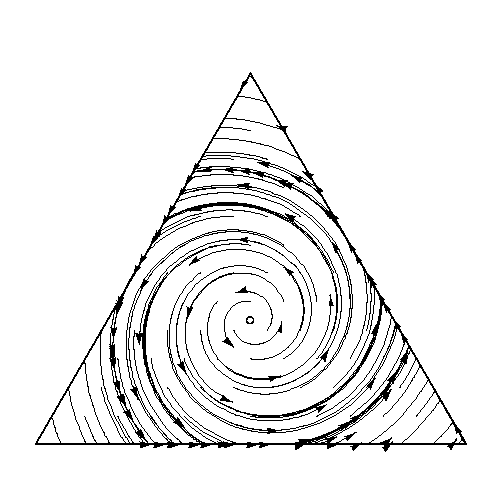}}
\caption{The projection dynamics display invariant limit cycles and unstable equilibrium points in the RPS}
\end{figure}

\begin{figure}[ht]
\centering
\subfloat[bad RPS]{\label{fig:Dash_BadRPS}\includegraphics[width=0.5\textwidth]{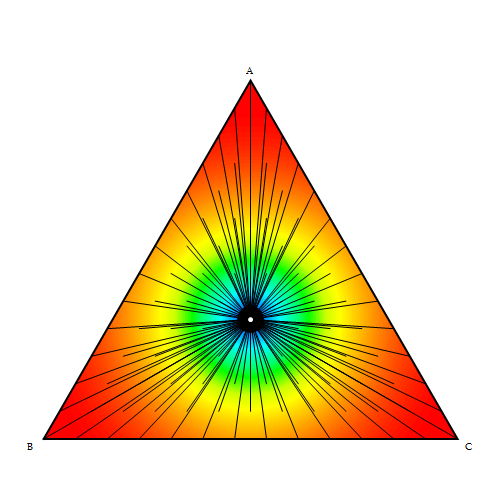}}
\subfloat[RPS]{\label{fig:Dash_RPS}\includegraphics[width=0.5\textwidth]{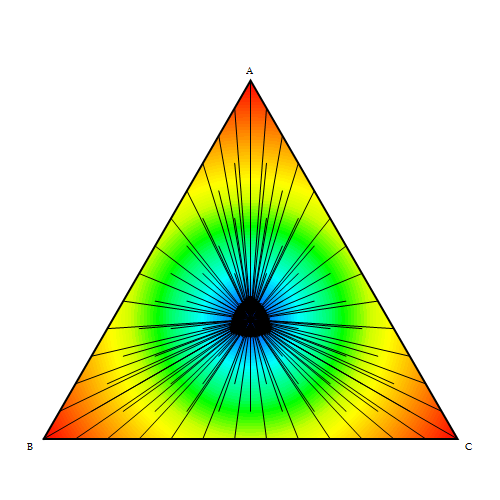}}\\
\subfloat[good RPS]{\label{fig:Dash_GoodRPS}\includegraphics[width=0.5\textwidth]{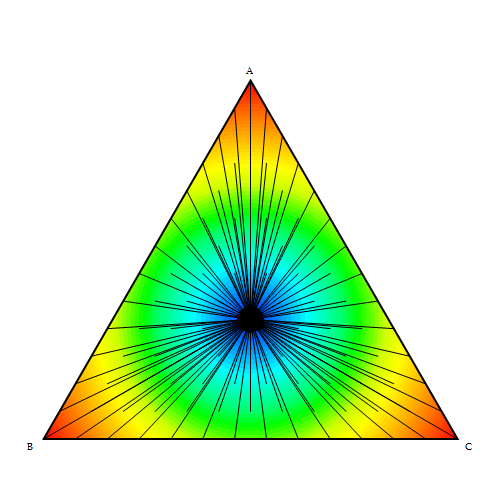}}
\caption{Global asymptotic stability of the uniform distribution in the simultaneous updating dynamics}
\end{figure}

\section{Agreement Among Incentives}

We note the incentive $\varphi^D_{i\alpha}(x) = \sum_\gamma(a_{\alpha\gamma}-u_i(x))_+$ can be rewritten in the form $\varphi^D_{i\alpha}(x) = \sum_\gamma(u_i(e_\alpha,x_{-i})-u_i(x)+a_{\alpha\gamma}-u_i(e_\alpha,x_{-i}))_+$, which shows that it is similar to a Nash comparison in that we are checking the payoff given the other agents' strategies are fixed. However, we are tempering that comparison by taking away the amount by which the agent is not receiving a preferred payoff available in the game. We will now show there is a class of games, which includes general RPS, with the property that the uniform distribution is a Nash equilibrium as well as an incentive equilibrium for $\varphi^D(x)$. First we will need the following lemma.
\begin{lemma}
If $A_i$ is the payoff matrix for the $i$th agent, then $\hat{x}$ is a Nash equilibrium where each agent is using the uniform distribution over its strategies if and only if for each $i$, $A_i$ has an equal sum across rows.
\end{lemma}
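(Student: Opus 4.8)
The plan is to reduce the Nash condition at the uniform profile to a simple indifference statement and then identify that indifference with equality of row sums. The central tool is the standard best-response characterization of Nash equilibria: a mixed strategy for agent $i$ is a best response to the others' play if and only if every pure strategy in its support achieves the maximal expected payoff against $\hat{x}_{-i}$. Since the uniform distribution has full support, $\hat{x}_i$ being a best response is equivalent to \emph{every} pure strategy of agent $i$ yielding the same expected payoff against $\hat{x}_{-i}$, i.e., to agent $i$ being indifferent among all of its strategies.

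First I would compute that common payoff explicitly. With every agent other than $i$ playing uniformly and independently, each pure strategy profile $\gamma$ of the opponents (an element of $S_{-i}$) occurs with the same probability $c_i = \prod_{j\neq i} 1/n_j > 0$, where $n_j$ is the number of strategies available to agent $j$. Hence the expected payoff to agent $i$ from playing pure strategy $\alpha$ is $u_i(e_\alpha,\hat{x}_{-i}) = c_i \sum_\gamma (A_i)_{\alpha\gamma}$, which is exactly $c_i$ times the $\alpha$-th row sum of $A_i$.

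Combining these two observations yields the claim in both directions. If $\hat{x}$ is a Nash equilibrium, then by the indifference principle the quantities $c_i \sum_\gamma (A_i)_{\alpha\gamma}$ are equal across all $\alpha$; dividing through by the positive constant $c_i$ shows the rows of $A_i$ share a common sum. Conversely, if every row of $A_i$ has the same sum, then agent $i$'s payoff against $\hat{x}_{-i}$ is independent of $\alpha$, so every strategy---in particular the uniform one---is a best response; since this holds for each $i$ simultaneously, the uniform profile is a Nash equilibrium.

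I do not anticipate a serious obstacle, as the argument is essentially the indifference principle applied to a full-support equilibrium. The one point warranting care is the independence step: one must confirm that when each opponent randomizes uniformly, the induced distribution over the joint profile space $S_{-i}$ is itself uniform, so that the payoff collapses to an unweighted row sum rather than a weighted one. Indexing the columns of the "payoff matrix" $A_i$ by the opponents' pure profiles makes this bookkeeping transparent and keeps the two-player general RPS case, where $A_i$ is literally the payoff bimatrix, as an immediate special instance.
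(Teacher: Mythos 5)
Your proposal is correct and follows essentially the same route as the paper: both arguments reduce the Nash condition at the full-support uniform profile to indifference among all pure strategies, then compute $u_i(e_\alpha,\hat{x}_{-i})$ against uniform opponents as the positive constant $\frac{s_i}{|S|} = \prod_{j\neq i}\frac{1}{s_j}$ times the $\alpha$-th row sum of $A_i$ (the paper obtains this by iterated multilinear expansion, you by observing directly that the induced distribution on $S_{-i}$ is uniform). Your write-up is, if anything, slightly more explicit about the converse direction, which the paper leaves implicit in the indifference characterization.
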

\begin{proof}
We begin by noting that for an interior Nash equilibrium we must have \[u_i(e_1,x_{-i}) = \ldots = u_i(e_{s_i},x_{-i}),\ \forall i.\] It should suffice then to calculate the value of just one of the $u_i(e_\alpha,x_{-i})$. We will use the $n$-linearity of the payoffs to complete the task. \begin{align} u_1(e_1,x_{-1}) & =  \sum_{j_2=1}^{s_2}\frac{1}{s_2}u_1(e_1,e_{j_2},x_3,\ldots, x_n) \\
& =  \frac{1}{s_2}\sum_{j_2=1}^{s_2}u_1(e_1,e_{j_2},x_3,\ldots, x_n) \\
& =  \frac{1}{s_2}\frac{1}{s_3}\sum_{j_2=1}^{s_2}\sum_{j_3=1}^{s_3}u_1(e_1,e_{j_2},e_{j_3},x_4,\ldots, x_n) \\
& = \ldots = \frac{1}{\prod_{i\in N/\{1\}}s_i}\sum_{j_2=1}^{s_2}\sum_{j_3=1}^{s_3}\cdots\sum_{j_n=1}^{s_n} u_1(e_1,e_{j_2},e_{j_3},\ldots ,e_{j_n}) \\
& = \frac{s_1}{|S|}\sum_\beta u_1(e_1,e_{-1\beta})
\end{align} which is exactly the average of the coefficients in the first row of $A_1$. Thus for any agent $i$ we have the equalities \[ \frac{s_i}{|S|}\sum_\beta u_i(e_1,e_{-i\beta}) = \frac{s_i}{|S|}\sum_\beta u_i(e_2,e_{-i\beta}) = \ldots = \frac{s_i}{|S|}\sum_\beta u_i(e_{s_i},e_{-i\beta})\] which after cancellation of the non-zero term $\frac{s_i}{|S|}$ proves our assertion.
\end{proof}
\begin{proposition}
If uniform distribution, $\hat{x}\in\Delta$, is a Nash equilibrium and in each of the payoff matrices the sums of the elements in each row that are larger than the average are equal, then it is an incentive equilibrium for $\varphi^D(x)$.
\end{proposition}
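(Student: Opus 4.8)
The plan is to collapse the claim to a single row‑independence statement and then read it off from the hypotheses. Fix an agent $i$. Because $\hat{x}_{i\alpha}=1/s_i$ at the uniform distribution, the fixed‑point characterization $\varphi^D_{i\alpha}(\hat{x})=\hat{x}_{i\alpha}\sum_\beta\varphi^D_{i\beta}(\hat{x})$ becomes $\varphi^D_{i\alpha}(\hat{x})=\frac{1}{s_i}\sum_\beta\varphi^D_{i\beta}(\hat{x})$. The right‑hand side is merely the average of $\varphi^D_{i\beta}(\hat{x})$ over $\beta$ and does not depend on $\alpha$; hence $\hat{x}$ is an incentive equilibrium for $\varphi^D$ precisely when, for every agent $i$, the quantity $\varphi^D_{i\alpha}(\hat{x})$ takes the same value for all pure strategies $\alpha$. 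So the whole proposition reduces to showing that $\alpha\mapsto\varphi^D_{i\alpha}(\hat{x})$ is constant.

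Next I would evaluate $\varphi^D_{i\alpha}$ at $\hat{x}$. Since $\hat{x}$ is assumed to be a Nash equilibrium, the Lemma applies: each $A_i$ has equal row sums, and the computation in its proof identifies the common equilibrium payoff $u_i(\hat{x})$ with the common row average $\mu_i=\frac{1}{m_i}\sum_\gamma a_{\alpha\gamma}$, where $m_i=\prod_{j\neq i}s_j$ is the number of opposing pure profiles. Substituting $u_i(\hat{x})=\mu_i$ into $\varphi^D_{i\alpha}(x)=\sum_\gamma(a_{\alpha\gamma}-u_i(x))_+$ yields
\[ \varphi^D_{i\alpha}(\hat{x})=\sum_{\gamma:\,a_{\alpha\gamma}>\mu_i}\bigl(a_{\alpha\gamma}-\mu_i\bigr)=T_\alpha-\mu_i\,k_\alpha, \]
where $T_\alpha=\sum_{\gamma:\,a_{\alpha\gamma}>\mu_i}a_{\alpha\gamma}$ is the total mass of the entries of row $\alpha$ lying strictly above the average and $k_\alpha$ is the number of such entries. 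Equivalently, since equal row sums force the signed deviations in each row to cancel, $\varphi^D_{i\alpha}(\hat{x})=\tfrac12\sum_\gamma|a_{\alpha\gamma}-\mu_i|$ is half the absolute deviation of row $\alpha$ from its mean, a form that may make the required symmetry more transparent.

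With this formula the hypothesis does the visible work: assuming that in each payoff matrix the above‑average row sums agree makes $T_\alpha=T$ independent of $\alpha$, so that $\varphi^D_{i\alpha}(\hat{x})=T-\mu_i k_\alpha$. I expect the genuine obstacle to be the residual term $\mu_i k_\alpha$: equal above‑average totals are not by themselves enough to force $\varphi^D_{i\alpha}(\hat{x})$ to be constant once $\mu_i\neq0$, because the count $k_\alpha$ of above‑average entries can still vary from row to row. The step that needs care is therefore to argue that $\mu_i k_\alpha$ is also row‑independent. This is immediate in the degenerate case $\mu_i=0$ (zero equilibrium payoff), and for generalized RPS it holds because the cyclic structure makes every row a permutation of $\{0,a,-b\}$, so both $k_\alpha$ and $T_\alpha$ are automatically constant, confirming the worked example. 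I would accordingly either restrict attention to this symmetric situation or supplement the hypothesis with the requirement that the number of above‑average entries also agree across rows, after which $\varphi^D_{i\alpha}(\hat{x})=T-\mu_i k$ is constant in $\alpha$ and the reduction of the first paragraph completes the proof.
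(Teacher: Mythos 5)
Your reduction and calculation are exactly the paper's: incentive equilibrium at the uniform point is equivalent to $\alpha\mapsto\varphi^D_{i\alpha}(\hat{x})$ being constant, the Lemma identifies $u_i(\hat{x})$ with the common row average $\bar{a}_i$, and hence $\varphi^D_{i\alpha}(\hat{x})=\sum_\gamma(a_{\alpha\gamma}-\bar{a}_i)_+=T_\alpha-\bar{a}_i k_\alpha$ in your notation. The obstacle you flagged is genuine, and it is in fact a defect in the paper's own proof under the literal reading of the hypothesis: the paper describes $\sum_\gamma(a_{\alpha\gamma}-\bar{a}_i)_+$ as ``the sum of all the elements from row $\alpha$ that are larger than the average'' (which it is not --- it is the sum of the \emph{excesses} over the average) and then invokes the assumption. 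Read literally, i.e.\ as equality of the $T_\alpha$, the proposition is false. Take two players with
\[
A_1=A_2=\begin{pmatrix} 3 & 0 & 0\\ 3/2 & 3/2 & 0\\ 0 & 3/2 & 3/2\end{pmatrix}.
\]
All row sums equal $3$, so the uniform distribution is a Nash equilibrium by the Lemma, and $T_1=T_2=T_3=3$; yet $\bar{a}_i=1$ gives $\varphi^D_{i\cdot}(\hat{x})=(2,1,1)$, so $\hat{x}$ is not an incentive equilibrium.

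What the paper evidently intends is for the hypothesis itself to be equality across rows of the deviation sums $\sum_\gamma(a_{\alpha\gamma}-\bar{a}_i)_+$ --- equivalently, since equal row sums force the signed deviations to cancel, of $\tfrac12\sum_\gamma|a_{\alpha\gamma}-\bar{a}_i|$, the quantity you wrote down. Under that reading your formula finishes the proof at once, with no condition on the counts $k_\alpha$, and RPS-type games (rows permutations of one another) still satisfy it. Your proposed repair --- additionally requiring the counts $k_\alpha$ to agree --- is sufficient, since then $\varphi^D_{i\alpha}(\hat{x})=T-\bar{a}_i k$ is constant, but it is strictly stronger than needed: the matrix with rows $(3,0,0)$, $(2,2,-1)$, $(0,0,3)$ has equal row sums and equal deviation sums (all equal to $2$), so the uniform point \emph{is} an incentive equilibrium there, yet both $T_\alpha$ and $k_\alpha$ vary across rows, so your strengthened hypothesis would exclude it. In short: your route is the paper's route, your formula is correct, and the discrepancy you caught is a real imprecision in the statement; the minimal fix is to restate the hypothesis in terms of deviation sums rather than to strengthen it.
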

\begin{proof}
We will use the above lemma to prove the assertion. Given that the rows must all have an equal sum, the average of the elements in $A_i$, which we will denote $\bar{a}_i$, is equal to $\frac{s_i}{|S|}\sum_\beta a_{i1\beta}$. Let us now consider the condition for an incentive equilibrium when our incentive is given by $\varphi^D(x)$. At a Nash equilibrium we have the following calculation for each agent $i$
\begin{align}
\varphi^D_{i\alpha}(\hat{x}) & = \sum_\gamma(u_i(e_{i\alpha},\hat{x}_{-i})-u_i(\hat{x})+a_{\alpha\gamma}-u_i(e_{i\alpha},\hat{x}_{-i}))_+ \\
& = \sum_\gamma(a_{\alpha\gamma}-u_i(e_{i\alpha},\hat{x}_{-i}))_+ \\
& = \sum_\gamma\left(a_{\alpha\gamma}-\frac{s_i}{|S|}\sum_\beta u_i(e_{i\alpha},e_{-i\beta})\right)_+ \\
& = \sum_\gamma (a_{\alpha\gamma} - \bar{a}_i)_+
\end{align} where the second line is justified since $\hat{x}$ is a Nash equilibrium and thus $u_i(e_{i\alpha},\hat{x}_{-i})=u_i(\hat{x})$. The last line is simply the sum of all the elements from row $\alpha$ that are larger than the average. Given our assumption, it must be the case that $\varphi^D_{i\alpha}(\hat{x}) = \varphi^D_{i\beta}(\hat{x})$ for every $\alpha$ and $\beta$. Thus we have $\varphi^D_{i\alpha}(\hat{x}) = \frac{1}{s_i}\sum_\beta\varphi^D_{i\beta}(\hat{x})$ for every agent $i$, which is true if and only if $\hat{x}$ is an incentive equilibrium.
\end{proof}
To summarize, we found a class of games where the Nash equilibrium coincides with the incentive equilibrium for $\varphi^D(x)$ at the uniform distribution. All RPS games have the property that the rows of the payoff matrices are permutations of the first row. Games with this property form a subset of the games where the Nash equilibrium and our incentive equilibrium agree.

We conjecture that this is the only agreement outside of constant games and strategies where players are receiving their respective maximum payoff. There are simple counterexamples when either of the conditions is dropped. For example, if $A_i = \left(\begin{array}{rrr} 
1&0&0\\0&1&0\\0&-3&1
\end{array}\right)$, the average is 0, but the sums across rows are not equal. The interior Nash equilibrium is $\hat{x}=( (1/6,1/6,2/3), (1/6,1/6,2/3) )$ while the incentive equilibrium for $\varphi^D$ is the uniform distribution. On the other hand, if $A_i = \left(\begin{array}{rr} 
1&2\\3&0
\end{array}\right)$ then the Nash equilibrium is the uniform distribution, but the incentive equilibrium is $\hat{x} \approx ((0.31,0.69),(0.31,0.69))$.

\section{Asymptotic Stability}

As we have seen, many of the dynamics that have Nash equilibria as fixed points do not necessarily converge to the uniform distribution. The specific examples that do (at least so far) have been Rock-Paper-Scissors type games. We notice that the main idea is to create a cycle of best replies by permuting the values in the first row of the payoff matrix. This cyclic behavior is essentially the problem with convergence. We will now show that changing the parameters while maintaining this type of cyclic payoff structure has no impact on the asymptotic stability of the incentive equilibrium for $\varphi^D(x)$. 
\begin{proposition}
If the rows of the payoff matrix $A_i$ are permutations of each other, $\varphi^D_{i\alpha}(x)=\varphi^D_{i\beta}(x)$ for all $x\in\Delta$ and either $\varphi^D_{i\alpha}(\hat{x}) = 0$ or $\hat{x}_{i\alpha} = \frac{1}{s_i}$ for every $\alpha$ at incentive equilibrium.
\end{proposition}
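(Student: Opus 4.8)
The plan is to establish the statement in two independent pieces: first that the permutation hypothesis already forces all incentive components of agent $i$ to coincide as functions on $\Delta$, and then that feeding this into the fixed-point characterization of an incentive equilibrium collapses to the stated dichotomy.

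For the first piece, I would observe that $\varphi^D_{i\alpha}(x) = \sum_\gamma (a_{\alpha\gamma} - u_i(x))_+$ is, for fixed $x$, simply a sum taken over the entries of row $\alpha$ of the function $t \mapsto (t - u_i(x))_+$, and that the threshold $u_i(x)$ does not depend on $\alpha$. A finite sum is unchanged under any reindexing of its summands, so if row $\beta$ is obtained from row $\alpha$ by a permutation $\sigma$ of the column index (i.e.\ $a_{\beta\gamma} = a_{\alpha\sigma(\gamma)}$), then $\sum_\gamma(a_{\beta\gamma} - u_i(x))_+ = \sum_\gamma(a_{\alpha\sigma(\gamma)} - u_i(x))_+ = \sum_{\gamma'}(a_{\alpha\gamma'} - u_i(x))_+$. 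Hence $\varphi^D_{i\alpha}(x) = \varphi^D_{i\beta}(x)$ for all $\alpha,\beta$ and for every $x \in \Delta$. This is exactly the first assertion, and it holds everywhere, not merely at the equilibrium.

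For the second piece, write $c_i$ for the common value $\varphi^D_{i\alpha}(\hat{x})$ shared by all $\alpha$, so that $\sum_\beta \varphi^D_{i\beta}(\hat{x}) = s_i c_i$. Substituting into the incentive-equilibrium condition $\varphi^D_{i\alpha}(\hat{x}) = \hat{x}_{i\alpha}\sum_\beta \varphi^D_{i\beta}(\hat{x})$ yields $c_i = s_i c_i\,\hat{x}_{i\alpha}$, or equivalently $c_i\,(1 - s_i \hat{x}_{i\alpha}) = 0$ for every $\alpha$. Since $c_i$ is independent of $\alpha$, either $c_i = 0$ --- in which case every $\varphi^D_{i\alpha}(\hat{x})$ vanishes --- or $\hat{x}_{i\alpha} = 1/s_i$ for every $\alpha$, which is the uniform distribution for agent $i$. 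This is precisely the claimed alternative.

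The argument is short, so there is no single hard step; the only place requiring care is the bookkeeping in the first piece for multiplayer games, where the ``rows'' of $A_i$ are indexed by agent $i$'s pure strategy $\alpha$ while the ``columns'' $\gamma$ range over the opponents' pure-strategy profiles $e_\gamma \in S_{-i}$. Once one is convinced that the permutation hypothesis means the multiset of entries in each such row is identical, the reindexing observation does all the work and the remainder is a one-line substitution.
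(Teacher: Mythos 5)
Your proposal is correct and follows essentially the same route as the paper's own proof: the reindexing of the sum $\sum_\gamma(a_{\alpha\sigma(\gamma)}-u_i(x))_+$ under the permutation hypothesis (valid for all $x\in\Delta$ since the threshold $u_i(x)$ is independent of $\alpha$), followed by substitution of the common value into the fixed-point condition $\varphi^D_{i\alpha}(\hat{x})=\hat{x}_{i\alpha}\sum_\beta\varphi^D_{i\beta}(\hat{x})$ to obtain the dichotomy $c_i=0$ or $\hat{x}_{i\alpha}=\frac{1}{s_i}$ for all $\alpha$. No gaps; the two arguments are the same in substance and nearly the same in presentation.
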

\begin{proof}
Denote $\sigma$ as the permutation that takes row $\alpha$ to row $\beta$; then every element in row $\beta$ can be written as $a_{i\beta\gamma} = a_{i\alpha\sigma(k)}$ for some $k\in S_i$. Thus $\varphi^D_{i\beta} = \sum_\gamma (a_{i\beta\gamma}-u_i(x))_+ = \sum_k (a_{i\alpha\sigma(k)}-u_i(x))_+ = \varphi^D_{i\alpha}(x)$ regardless of $x$. 

We can now use this fact to describe all possible incentive equilibria for $\varphi^D(x)$. By definition, at equilibrium $\hat{x}$, $\varphi^D_{i\alpha}(\hat{x})=\hat{x}_{i\alpha}\sum_\beta\varphi^D_{i\beta}(\hat{x})$ for every $i$ and every $\alpha$. Given that the incentive functions are all equal regardless of $x\in\Delta$, we must have $\varphi^D_{i\alpha}(\hat{x})=\hat{x}_{i\alpha}s_i\varphi^D_{i\alpha}(\hat{x})$, which is true if and only if $\varphi^D_{i\alpha}(\hat{x})=0$ for all $\alpha$, which can occur only at the boundary or in a degenerate game, or when $\hat{x}_{i\alpha} = \frac{1}{s_i}$ for every $\alpha$.
\end{proof}
Recall the definition of an ISS is \[\hat{x}_i\cdot \frac{\varphi^D_i(x)}{x_i}> x_i \cdot \frac{\varphi^D_i(x)}{x_i}\] for all $x\neq\hat{x}$ in some neighborhood of $\hat{x}$. Also, an ISS is asymptotically stable wherever it satisfies the inequality in the definition. It will suffice then to prove that the uniform distribution is an ISS and the entire space is its basin of attraction.
\begin{theorem}
If $\hat{x}$ is a uniform incentive equilibrium for $\varphi^D(x)$ and the payoff matrices have rows that are permutations of each other, then $\hat{x}$ is globally asymptotically stable in $int\Delta$ for the incentive dynamics.
\end{theorem}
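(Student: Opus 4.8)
The plan is to verify the defining inequality of an ISS directly on all of $int\Delta$ and then invoke the fact, stated just above, that an ISS is asymptotically stable wherever its defining inequality holds; establishing the inequality on the whole of $int\Delta\setminus\{\hat x\}$ (rather than merely a neighborhood) is what will upgrade local stability to global stability with the entire interior as basin of attraction. The first step is to exploit the preceding proposition: since the rows of each $A_i$ are permutations of one another, $\varphi^D_{i\alpha}(x)$ is independent of $\alpha$, so I write $c_i(x)$ for this common value. This collapses both inner products appearing in the ISS definition into expressions involving only $c_i(x)$ and the coordinates $x_{i\alpha}$.

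Next I would substitute $\hat x_{i\alpha}=1/s_i$ and pull $c_i(x)$ out of each sum. The left-hand side $\hat x_i\cdot\frac{\varphi^D_i(x)}{x_i}$ becomes $\frac{c_i(x)}{s_i}\sum_\alpha \frac{1}{x_{i\alpha}}$, while the right-hand side $x_i\cdot\frac{\varphi^D_i(x)}{x_i}=\sum_\alpha c_i(x)=s_i\,c_i(x)$. The per-agent ISS gap is therefore $c_i(x)\left(\frac{1}{s_i}\sum_\alpha\frac{1}{x_{i\alpha}}-s_i\right)$. Because $\sum_\alpha x_{i\alpha}=1$, the arithmetic--harmonic mean inequality yields $\sum_\alpha \frac{1}{x_{i\alpha}}\ge s_i^2$, with equality precisely when every $x_{i\alpha}=1/s_i$, i.e.\ when $x_i=\hat x_i$. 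Thus the bracketed factor is nonnegative for every agent and strictly positive for any agent whose mixed strategy differs from uniform.

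The step I expect to be the crux is establishing $c_i(x)>0$ throughout $int\Delta$, since without it the nonnegative bracket could be annihilated and the inequality would fail to be strict. Here I would bound $c_i(x)=\sum_\gamma(a_{\alpha\gamma}-u_i(x))_+\ge \max_\gamma a_{\alpha\gamma}-u_i(x)$ and observe that $u_i(x)$ is a convex combination of the entries of $A_i$ with all weights positive in the interior, so $u_i(x)<\max_\gamma a_{\alpha\gamma}$ unless $A_i$ is constant, which is exactly the degenerate case excluded from consideration. For general RPS this is immediate, as the common row maximum $a$ strictly exceeds the expected payoff at every interior point. With $c_i(x)>0$ secured, each deviating agent contributes a strictly positive term, and summing over agents shows the total ISS gap is strictly positive for every $x\neq\hat x$ in $int\Delta$.

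Finally I would assemble the pieces. The strict inequality on all of $int\Delta\setminus\{\hat x\}$ makes $\hat x$ an ISS whose defining inequality holds globally, so by the cited result it is asymptotically stable with the whole interior as its basin. Equivalently, and as an independent check, the relative entropy $V(x)=\sum_{i,\alpha}\hat x_{i\alpha}\ln(\hat x_{i\alpha}/x_{i\alpha})$ serves as a strict Lyapunov function: a short computation along the incentive dynamics gives $\dot V=-\sum_i\big(\hat x_i\cdot\frac{\varphi^D_i(x)}{x_i}-x_i\cdot\frac{\varphi^D_i(x)}{x_i}\big)$, which is exactly minus the ISS gap and hence negative off $\hat x$; since $V$ diverges as any coordinate approaches the boundary, no interior trajectory can escape and every one must converge to $\hat x$, giving global asymptotic stability in $int\Delta$.
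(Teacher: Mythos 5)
Your proof is correct and takes essentially the same route as the paper: both collapse the incentives to a common value $c_i(x)$ via the preceding proposition, reduce the ISS gap to $c_i(x)\left(\frac{1}{s_i}\sum_\alpha \frac{1}{x_{i\alpha}}-s_i\right)$, and finish with the fact that $\sum_\alpha \frac{1}{x_{i\alpha}}\geq s_i^2$ with equality exactly at the uniform distribution (the paper via Lagrange multipliers, you via the arithmetic--harmonic mean inequality). Where you go beyond the paper---establishing $c_i(x)>0$ on $int\Delta$ from the permutation structure, summing the per-agent gaps so strictness survives when only some agents sit at uniform, and the relative-entropy Lyapunov computation---you are filling in steps the paper's proof leaves implicit, and those additions are sound.
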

\begin{proof}
The previous proposition gives us that the incentives are equal for all $\alpha$ so we can without loss of generality use only $\varphi_{i1}^D(x)$ for each $i$.

\begin{align*}
0 & > - \frac{1}{s_i}\sum_\alpha \frac{\varphi^D_{i\alpha}(x)}{x_{i\alpha}} + \sum_\beta \varphi^D_{i\beta}(x) \\
& = -\frac{\varphi^D_{i1}(x)}{s_i}\sum_\alpha \frac{1}{x_{i\alpha}}+s_i\varphi^D_{i1}(x) \\
& = \frac{\varphi^D_{i1}(x)}{s_i}\left[s_i^2-\sum_\alpha \frac{1}{x_{i\alpha}}\right]\\
\end{align*}

If we define $f(x) = \sum_\alpha\frac{1}{x_{i\alpha}}$ it is easy to show that $f(x)$ has a global minimum of $s_i^2$ when $x_i$ is the uniform distribution. We simply optimize using Lagrange multipliers, noting that the Hessian matrix of $f(x)$ is positive definite in the interior of $\Delta$. Thus $\hat{x}$ satisfies the ISS definition for all $x\in int\Delta$.
\end{proof}

We further conjecture that all interior incentive equilibrium are asymptotically stable. If this is true, we can reduce the open problem of finding a game dynamic where every orbit converges to a rest point to proving that the basins of attraction for the incentive equilibrium form a partition of $\Delta$.

\bibliographystyle{amsalpha}
\bibliography{refs}

\end{document}